\documentclass[pra,preprint,aps]{revtex4}
\usepackage[T1]{fontenc}
\usepackage{amsthm}
\usepackage{amssymb}
\usepackage{graphicx}
\providecommand{\tabularnewline}{\\}
\theoremstyle{plain}
\newtheorem*{thm*}{Theorem}
\makeatother

\begin{document}

\newcommand{\be}{\begin{equation}}
\newcommand{\ee}{\end{equation}}
\newcommand{\ben}{\begin{eqnarray}}
\newcommand{\een}{\end{eqnarray}}
\newcommand{\ra}{\rangle}
\newcommand{\la}{\langle}
\newcommand{\ov}{\overline}
\newcommand{\kn}{| n \rangle}
\newcommand{\bn}{ \langle n |}
\newcommand{\til}{\tilde}
\newcommand{\iii}{\'{\i}}

\title{Quantum Potentials with q-Gaussian Ground States}

 \author{C. Vignat$^{1}$, A. Plastino$^{2}$,
 A.R. Plastino$^{2,3}$ and J.S. Dehesa$^{3}$}
 \address{$^{1}$L.S.S., Supelec, Université d'Orsay,
 Paris, France  \\
 $^{2}$National University La Plata, UNLP-CREG-CONICET,
 Casilla de Correos 727, 1900 La Plata, Argentina\\
 $^{3}$Instituto de F\'{\i}sica Te\'orica y Computacional
 Carlos I and  Departamento de F\'{\i}sica At\'omica, Molecular y Nuclear,
 Universidad de Granada, Granada, Spain, EU}
\begin{abstract}
 We determine families of spherically symmetrical $D$-dimensional
 quantum potential functions $V(r)$ having ground state wavefunctions
 that exhibit, either in configuration or in momentum space, the
 form of an isotropic $q$-Gaussian. These wavefunctions admit
 a maximum entropy description in terms of $S_q$ power-law
 entropies. We show that the potentials with a ground state
 of the $q$-Gaussian form in momentum space admit the Coulomb
 potential $-1/r$ as a particular instance. Furthermore, all
 these potentials behave asymptotically as the Coulomb
 potential for large $r$
 for all values of the parameter $q$ such that $0<q<1.$
\end{abstract}
\maketitle

\section{Introduction}

Extended versions of the maximum entropy principle based upon
power law $S_q$ entropies \cite{GT04,T09} have been found to
provide useful tools for the description of several physical
systems or processes \cite{GT04,T09,PP93a,RPCP00,ASLMM05,PA97}.
Indeed, various important equations in mathematical physics admit
exact solutions of the maximum $S_q$ form such as, for example,
the polytropic solutions to the Vlasov-Poisson equations
\cite{PP93a}, time dependent solutions to some evolution equations
involving non linear, power-law diffusion terms
\cite{RPCP00,ASLMM05}, or stationary phase-space distributions for
Liouville equations describing anomalous thermostatting processes
\cite{PA97}.

The application of information theoretical ideas to the study of
the eigenstates of diverse quantum systems has attracted the
attention of researchers in recent years
\cite{MYD10,LAD09,GDPS09,DLOY06,RN08,DB09,NR09,PCPP95}. The
standard maximum entropy principle, based on the optimization of
Shannon's entropic measure under appropriate constraints, plays a
distinguished role within these lines of enquiry. This principle
has been successfully applied to the characterization of the
eigenstates of various quantum systems (see, for instance,
\cite{NR09,PCPP95,PP93b} and references therein). In particular,
it is well known that the probability densities in both position
and momentum space corresponding to the ground state of the
isotropic $D$-dimensional quantum harmonic oscillator are
Gaussians, which are probability densities maximizing Shannon'
entropy under the constraints imposed by normalization and the
expectation value of the square $r^2$ of the radial coordinate.

It would be of considerable interest to extend to the $S_q$-based
framework the maximum entropy approach to the description of the
eigenstates of quantum systems. This formalism has already been
applied to the study of various quantum phenomena (see, for
example \cite{BCPP02,TBD98}), but its application to characterize
the probability densities associated with quantum eigenstates
remains largely unexplored. The maximum entropy formalism based on
the $S_q$ entropies leads to a generalization of the Gaussian
probability density, which is given by the so-called $q$-Gaussians
\cite{GT04,T09}. These $q$-Gaussian constitute some of the
simplest and most important examples of maximum-$S_q$
distributions. The aim of the present work is to determine the
form of those spherically symmetric quantum potentials $V(r)$
whose ground state wavefunctions (in position or in momentum
space) are associated with $q$-Gaussian densities.

\section{q-Gaussian Ground States in Configuration Space}

We are going to consider a spinless particle of mass $m$ in a
$D$-dimensional configuration space. The eigenfunctions $\psi({\bf
r })$ associated with a potential $V({\bf r})$ obey then the
Schrödinger equation,

\begin{equation}
-\frac{\hbar^{2}}{2m}\nabla^{2}\psi+V\psi=E\psi,\label{eq:Schrodinger}
\end{equation}

\noindent where $\nabla^{2}$ is the $D$-dimensional Laplacian
operator, $\hbar$ is Planck's constant and $E$ is the energy
eigenvalue. We assume in the rest of this paper that $m=\hbar=1$.
Since we are going to consider spherically symmetric potentials,
the Schrödinger equation for the concomitant ground states (which
are spherically symmetric) simplifies to

\be -\frac{1}{2 r^{D-1}}\frac{\partial}{\partial
r}\left(r^{D-1}\frac{\partial\psi}{\partial r}\right)+V\psi=E\psi,
\ee

\noindent where
\be r= \left( \sum_{i=1}^{D}x_{i}^{2} \right)^{1/2}\ee

\noindent is the radial coordinate.

Let us consider a $D-$dimensional spherical $q-$Gaussian
wavefunction in the configuration space

\begin{equation}
\psi\left(\mathbf{r}\right)=C\left(1-\left(q-1\right) \beta
r^{2}\right)^{\frac{1}{2\left(q-1\right)}}
\label{eq:phi}\end{equation}

\noindent where $q$ and $\beta$ are positive parameters and $C$ is
an appropriate normalization constant. If $q<1$ the $q$-Gaussian
wavefunction (\ref{eq:phi}) remains finite for all ${\bf r}\in
{\mathbb R}^D$. On the other hand, when $q>1$ the $q$-Gaussian
vanishes at $r=1/\sqrt{(q-1)\beta}$ and is set to zero for
$r>1/\sqrt{(q-1)\beta}$ (see below a discussion on the physical
meaning of this cut-off). The space probability density $\rho({\bf
r}) = |\psi\left(\mathbf{r}\right)|^2$ associated with the
wavefunction (\ref{eq:phi}) maximizes Tsallis' power-law entropic
functional

\be S_q \, = \, \frac{1}{q-1} \, \left(1 - \int \rho^q \, d{\bf r}
\right)\ee

\noindent under the constraints given by normalization and the
expectation value of $r^2$ \cite{T09} (it can also be regarded as
a probability density maximizing R\'{e}nyi's functional under the same
constraints).

 Replacing (\ref{eq:phi}) into (\ref{eq:Schrodinger}) and
after some algebra, we find that the wavefunction is the ground
state of the potential

\begin{equation} V=\frac{\beta}{2} \left[\frac{-D+\beta
r^{2}\left(D\left(q-1\right)+3-2q\right)}{\left(1-\left(q-1\right)\beta
r^{2}\right)^{2}}\right]\label{eq:potential}\end{equation}

\noindent with eigenenergy equal to $0.$

In the case $q\le 1$ the potential function (\ref{eq:potential})
is finite for all ${\bf r} \in {\mathbb R}^D$. On the other hand,
when $q>1$ the potential function is singular when $r$ adopts the
particular value

\be r_w \, = \, \sqrt{\frac{1}{(q-1)\beta}}. \ee

\noindent Physically, this means that when $q>1$ the potential
function (\ref{eq:potential}) has an ``infinite wall" at $r=r_w$
and the quantum particle is confined within the region $r\le r_w$.
In this case, the $q$-Gaussian wavefunction (\ref{eq:phi})
vanishes at $r=r_w$ and must be set equal to zero when $r\ge r_w$.
This constitutes an example of the so-called Tsallis' cut-of
condition \cite{T09,PP93a}.

In the limit $q \rightarrow 1$ the $q$-Gaussian wavefunction
(\ref{eq:phi}) becomes a standard Gaussian, and the potential
function (\ref{eq:potential}) reduces to the $D$-dimensioanl
isotropic harmonic oscillator potential (notice that the origin of
the energy scale is shifted)

\be V(r) \, = \, -\frac{D\beta}{2} + \frac{1}{2} \beta^2 r^{2}.
\ee

\noindent The one dimensional instance of the potential
(\ref{eq:potential}) has been studied in \cite{RPGCP00}. This
potential exhibits the interesting feature of approximate shape
invariance (see \cite{RPGCP00} for details). This approximate
symmetry becomes exact in the limit $q\rightarrow 1$.

\section{q-Gaussian ground states in momentum space}

We now look for solutions solution of the Schrödinger equation
having the form of a $q-$Gaussian in momentum space

\begin{equation}
\tilde \psi\left(\mathbf{p}\right)=C\left(1-\left(q-1\right) \beta
p^{2}\right)^{\frac{1}{2\left(q-1\right)}}
\label{eq:phimo}\end{equation}

\noindent where

\be p^{2}=\sum_{i=1}^{D}p_{i}^{2}.\ee

\noindent As in the previous case of $q$-Gaussians in
configuration space, $q$ and $\beta$ are positive parameters and
$C$ is a normalization constant. We are going to consider
$q$-Gaussians in momentum space with $q<1$. Our aim is to
determine potential functions $V(r)$ having a ground state that,
in momentum space, has the form (\ref{eq:phimo}). In order to do
this, it will prove convenient not to work directly with the
Schrödinger equation in momentum space but, instead, to determine
first the Fourier transform $\psi\left(\mathbf{r}\right)$ of $
\tilde \psi\left(\mathbf{p}\right)$ and then to consider
Schrödinger's equation in configuration space.

The Fourier transform of the $q-$Gaussian wave function
(\ref{eq:phimo}) is

\begin{equation}
\psi_{\nu}\left(\mathbf{r}\right)=
\frac{2^{1-\nu}}{\Gamma\left(\nu\right)}
r^{\nu}K_{\nu}\left(r\right)\label{eq:psi}\end{equation}

\noindent where $K_{\nu}$ is the Bessel function of the second
kind and $r=\vert\mathbf{r}\vert.$  The parameter parameter $\nu$
is given by
\be \nu=-\frac{D}{2}-\frac{1}{2\left(q-1\right)}.\ee

\begin{thm*}
The function $\psi_{\nu}\left(r\right)$ is solution of the Schrödinger
equation associated with a potential

\begin{equation} \label{elpote}
V_{\nu}\left(r\right)= - \frac{1}{2} \,
\left(1+\frac{D}{2\left(\nu-1\right)}\right)
\frac{\psi_{\nu-1}\left(r\right)}{\psi_{\nu}\left(r\right)}.
\label{eq:Vnupotential}\end{equation}

As a special case, $ $when the parameter $\nu=d+\frac{1}{2}$ is
half integer, this potential is of the form

\begin{equation}
V_{d+\frac{1}{2}}\left(r\right)= - \frac{1}{2} \,
\left(1+\frac{D}{2d-1}\right)
\frac{p_{d-1}\left(r\right)}{p_{d}\left(r\right)}
\label{eq:Vdpotential}\end{equation}

\noindent where $p_{d}\left(r\right)$ is the Bessel polynomial of
degree $d.$
\end{thm*}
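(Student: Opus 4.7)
My plan is to verify the theorem by direct substitution of $\psi_\nu$ into the radial Schr\"odinger equation, using two classical identities for the modified Bessel function $K_\nu$, and then to specialize to half-integer $\nu$ via the explicit closed form of $K_{d+1/2}$. The two identities I rely on are the differential recurrence $rK_\nu'(r)+\nu K_\nu(r) = -r K_{\nu-1}(r)$ and the modified Bessel equation $r^2 K_\nu''+rK_\nu'-(r^2+\nu^2)K_\nu=0$.

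Applying the recurrence directly to $\psi_\nu = c_\nu r^\nu K_\nu(r)$ with $c_\nu=2^{1-\nu}/\Gamma(\nu)$, and using $c_\nu/c_{\nu-1}=1/[2(\nu-1)]$, yields in one line the compact first-derivative identity
\[
\psi_\nu'(r) \;=\; -\,\frac{r}{2(\nu-1)}\,\psi_{\nu-1}(r).
\]
The main technical step is to promote this to the two-term Laplacian identity
\[
\nabla^2 \psi_\nu(r) \;=\; \psi_\nu(r)\;-\;\frac{2\nu+D-2}{2(\nu-1)}\,\psi_{\nu-1}(r).
\]
To obtain it I expand $\nabla^2=\partial_r^2+\frac{D-1}{r}\partial_r$ on $c_\nu r^\nu K_\nu$, use the Bessel equation to eliminate $K_\nu''$ in favour of $K_\nu$ and $K_\nu'$, and then apply the recurrence once more to collapse the surviving combination $\nu K_\nu+rK_\nu'$ into $-rK_{\nu-1}$; the three terms regroup into the clean two-term expression above. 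Substituting into $-\frac{1}{2}\nabla^2\psi+V\psi=E\psi$ with eigenenergy $E=-1/2$ then forces $V(r)\psi_\nu(r) = -\frac{2\nu+D-2}{4(\nu-1)}\psi_{\nu-1}(r)$, and the elementary rearrangement $\frac{2\nu+D-2}{4(\nu-1)}=\frac{1}{2}(1+\frac{D}{2(\nu-1)})$ produces (\ref{eq:Vnupotential}).

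For the half-integer specialization $\nu=d+1/2$ I would invoke the classical closed form
\[
K_{d+1/2}(r) \;=\; \sqrt{\frac{\pi}{2r}}\,e^{-r}\sum_{k=0}^{d}\frac{(d+k)!}{k!(d-k)!(2r)^{k}},
\]
so that $r^{d+1/2}K_{d+1/2}(r)$ equals $\sqrt{\pi/2}\,e^{-r}$ times a polynomial of degree $d$ in $r$, proportional to the reverse Bessel polynomial $p_d(r)$. Hence $\psi_{d+1/2}(r)$ and $\psi_{d-1/2}(r)$ are constant multiples of $e^{-r}p_d(r)$ and $e^{-r}p_{d-1}(r)$ respectively; the exponentials cancel in the ratio appearing in (\ref{eq:Vnupotential}), and (\ref{eq:Vdpotential}) follows once the residual prefactor is absorbed into the conventional normalization of the Bessel polynomials.

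The principal obstacle is the Laplacian identity in the middle paragraph: three contributions involving $K_\nu$, $K_\nu'$ and $K_\nu''$ must conspire to collapse into a clean $\psi_\nu$--$\psi_{\nu-1}$ combination, and this only works if the Bessel equation is applied first to eliminate the second derivative and the recurrence is then applied to a very specific surviving combination. Once that identity is in hand the rest of the argument is book-keeping, and the half-integer simplification is a direct consequence of the polynomial-in-$1/r$ closed form of $K_{d+1/2}$.
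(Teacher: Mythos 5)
Your proposal is correct and follows essentially the same route as the paper: both arguments verify by direct substitution that $\psi_\nu$ is an eigenfunction with eigenvalue $-\tfrac{1}{2}$, with the key step being the identity $\Delta\psi_{\nu}=\psi_{\nu}-\bigl(1+\tfrac{D}{2(\nu-1)}\bigr)\psi_{\nu-1}$, from which the potential is read off. The only (immaterial) difference is how that identity is reached --- the paper differentiates the rule $\psi_{\nu}'=-\tfrac{r}{2(\nu-1)}\psi_{\nu-1}$ a second time and removes the resulting $r^{2}\psi_{\nu-2}$ term via the three-term recurrence $rK_{\nu}=rK_{\nu-2}+2(\nu-1)K_{\nu-1}$, whereas you eliminate $K_{\nu}''$ with the modified Bessel ODE and apply the first-order recurrence once; your handling of the half-integer case via the closed form of $K_{d+1/2}$ is also sound.
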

\begin{proof}
The derivation rule for the function $\psi\left(r\right)$ is

\be \frac{1}{r}\frac{\partial}{\partial r}
\psi_{\nu}\left(r\right)=
-\frac{1}{2\left(\nu-1\right)}\psi_{\nu-1}\left(r\right)\ee

\noindent so that the Laplace operator reads

\be \frac{1}{r^{D-1}}\frac{\partial}{\partial
r}\left(r^{D-1}\frac{\partial\psi_{\nu}}{\partial
r}\right)=\frac{1}{r^{D-1}}\left(\left(D-1\right)
r^{D-2}\frac{\partial\psi_{\nu}}{\partial
r}+r^{D-1}\frac{\partial^{2}\psi_{\nu}}{\partial r^{2}}\right)\ee

\noindent The first term is

\be
-\frac{\left(D-1\right)}{2\left(\nu-1\right)}\psi_{\nu-1}\left(r\right)
\ee

\noindent and the second term

\be \frac{\partial^{2}\psi_{\nu}}{\partial
r^{2}}=-\frac{1}{2\left(\nu-1\right)}\frac{\partial}{\partial
r}\left(r\psi_{\nu-1}\left(r\right)\right)=
-\frac{1}{2\left(\nu-1\right)}\psi_{\nu-1}
\left(r\right)+\frac{1}{4\left(\nu-1\right)
\left(\nu-2\right)}r^{2}\psi_{\nu-2}\left(r\right)\ee

\noindent so that the Laplace operator applied to $\psi_{\nu}$ is

\be \frac{1}{r^{D-1}}\frac{\partial}{\partial
r}\left(r^{D-1}\frac{\partial\psi_{\nu}}{\partial
r}\right)=\left(-\frac{D}{2\left(\nu-1\right)}
\psi_{\nu-1}\left(r\right)+\frac{1}{4\left(\nu-1\right)
\left(\nu-2\right)}r^{2}\psi_{\nu-2}\left(r\right)\right)\ee

\noindent Moreover, the function Bessel function $K_{\nu}$ obeys
the difference equation

\be rK_{\nu}\left(r\right)=rK_{\nu-2}\left(r\right)+
2\left(\nu-1\right)K_{\nu-1}\left(r\right)\ee

\noindent so that

\be r^{\nu}K_{\nu}\left(r\right)=
r^{2}r^{\nu-2}K_{\nu-2}\left(r\right)+
2\left(\nu-1\right)r^{\nu-1}K_{\nu-1}\left(r\right)\ee

\noindent and

\be \psi_{\nu}\left(r\right)= r^{2}\frac{1}{4\left(\nu-1\right)
\left(\nu-2\right)}\psi_{\nu-2}
\left(r\right)+\psi_{\nu-1}\left(r\right).\ee

\noindent We deduce

\be
\Delta\psi_{\nu}\left(r\right)=\frac{1}{r^{D-1}}\frac{\partial}{\partial
r}\left(r^{D-1}\frac{\partial\psi_{\nu}}{\partial
r}\right)=\left(-\frac{D}{2\left(\nu-1\right)}-1\right)
\psi_{\nu-1}\left(r\right)+\psi_{\nu}\left(r\right).\ee

\noindent Consequently,

\be -\frac{1}{2} \Delta\psi_{\nu}\left(r\right) + \left[
-\frac{1}{2}\left(1+\frac{D}{2\left(\nu-1\right)}\right)
\frac{\psi_{\nu-1}\left(r\right)}{\psi_{\nu}\left(r\right)}
\right] \, \psi_{\nu}\left(r\right) \, = \, -\frac{1}{2} \,
\psi_{\nu}\left(r\right), \ee

\noindent which means that $\psi_{\nu}\left(r\right) $ is an
eigenfunction of the potential $V_{\nu}(r) $ given by equation
(\ref{elpote}), with eigenvalue equal to $-\frac{1}{2}$.

\end{proof}

\section{Special cases and asymptotics}

\subsection{Asymptotics}

The asymptotics for large $r$ of the potential
(\ref{eq:Vnupotential}) can be computed using
\cite[9.7.2]{Abramowitz}

\be K_{\nu}\left(r\right)\sim\sqrt{\frac{\pi}{2r}}
e^{-r}\left(1+\frac{4\nu^{2}-1}{8r}+\dots\right)\ee

\noindent so that the asymptotics for the potential
(\ref{eq:Vnupotential}) reads

\be V_{\nu}\left(r\right)\sim - \frac{\left(2\left(\nu-1\right)+
D\right)}{r}\left(1+\frac{1}{2r}
\left(1-2\nu\right)+\dots\right).\ee

\noindent We see then that, for large values of $r$, the
asymptotic behavior of the potential $V_{\nu}\left(r\right)$ is
dominated by a Coulomb-like term.

\subsection{Special cases}
\begin{enumerate}
\item Coulomb potential: taking $\nu=\frac{1}{2}$
and remarking that
$\psi_{\frac{1}{2}}\left(r\right)=\exp\left(-r\right)$ and
$\psi_{-\frac{1}{2}}\left(r\right)=
-\frac{1}{r}\psi_{\frac{1}{2}}\left(r\right),$ we deduce that

\be -\frac{1}{2}\Delta\psi_{\frac{1}{2}}\left(r\right)
-\frac{D-1}{2 r}\psi_{\frac{1}{2}}\left(r\right)=
-\frac{1}{2}\psi_{\frac{1}{2}}\left(r\right)\ee

\noindent which is the Schrödinger equation associated with a
Coulomb potential. The associated probability density in
configurational space can be obtained as the squared modulus of
the inverse Fourier transform of the ground state wavefunction in
momentum space, \be \tilde{\psi}_{\frac{1}{2}}
\left(\mathbf{p}\right)\propto
\left(1+\vert\mathbf{p}\vert^{2}\right)^{-\frac{D+1}{2}}.\ee  The
momentum space representation of the eigenfunctions corresponding
to the $D$-dimensional Coulomb potential have been studied in
detail by Aquilanti, Cavalli and Coletti in \cite{aquilanti}.

The $q$-value characterizing the ground state of the
$-\frac{1}{r}$ potential is different from one. Indeed, it depends
on the value of the space dimension $D$,
\be q \, = \, \frac{D}{D+1}. \ee

\item In Table I we give a few potentials resulting from different half integer
values of $\nu$. These potentials are depicted in Fig.1.  If $\nu
= \frac{1}{2} + d$, with $d$ integer, then the entropic parameter
$q$ characterizing the $q$-Gaussian is given by
\be q \, = \, \frac{D+2d}{D+2d+1}. \ee

\noindent It is interesting that, for a given fixed value of $d$,
we have that $q \rightarrow 1$ when $D \rightarrow \infty$. That
is, when the space dimension tends to infinity the $q$-Gaussian
describing the ground state in momentum space approaches a
standard Gaussian.

\begin{table}
\begin{centering}
\begin{tabular}{|c|c|c|c|}
\hline $\nu$ & $\frac{3}{2}$ & $\frac{5}{2}$ &
$\frac{7}{2}$\tabularnewline \hline $V_{\nu}\left(r\right)$ &
$-\frac{1}{2}\left[ \frac{1+D}{1+r} \right]$ &
$-\frac{1}{2}\left[\frac{\left(3+D\right)\left(r+1\right)}{r^{2}+3r+3}\right]$
& $-\frac{1}{2}\left[\frac{\left(5+D\right)
\left(r^{2}+3r+3\right)}{r^{3}+6r^{2}+15r+15}\right]$\tabularnewline
\hline
\end{tabular}
\par\end{centering}
\caption{Values of the potential for different half-integer values
of the parameter $\nu$}
\end{table}
\end{enumerate}

\begin{figure}
\begin{center}
\includegraphics[scale=0.4]{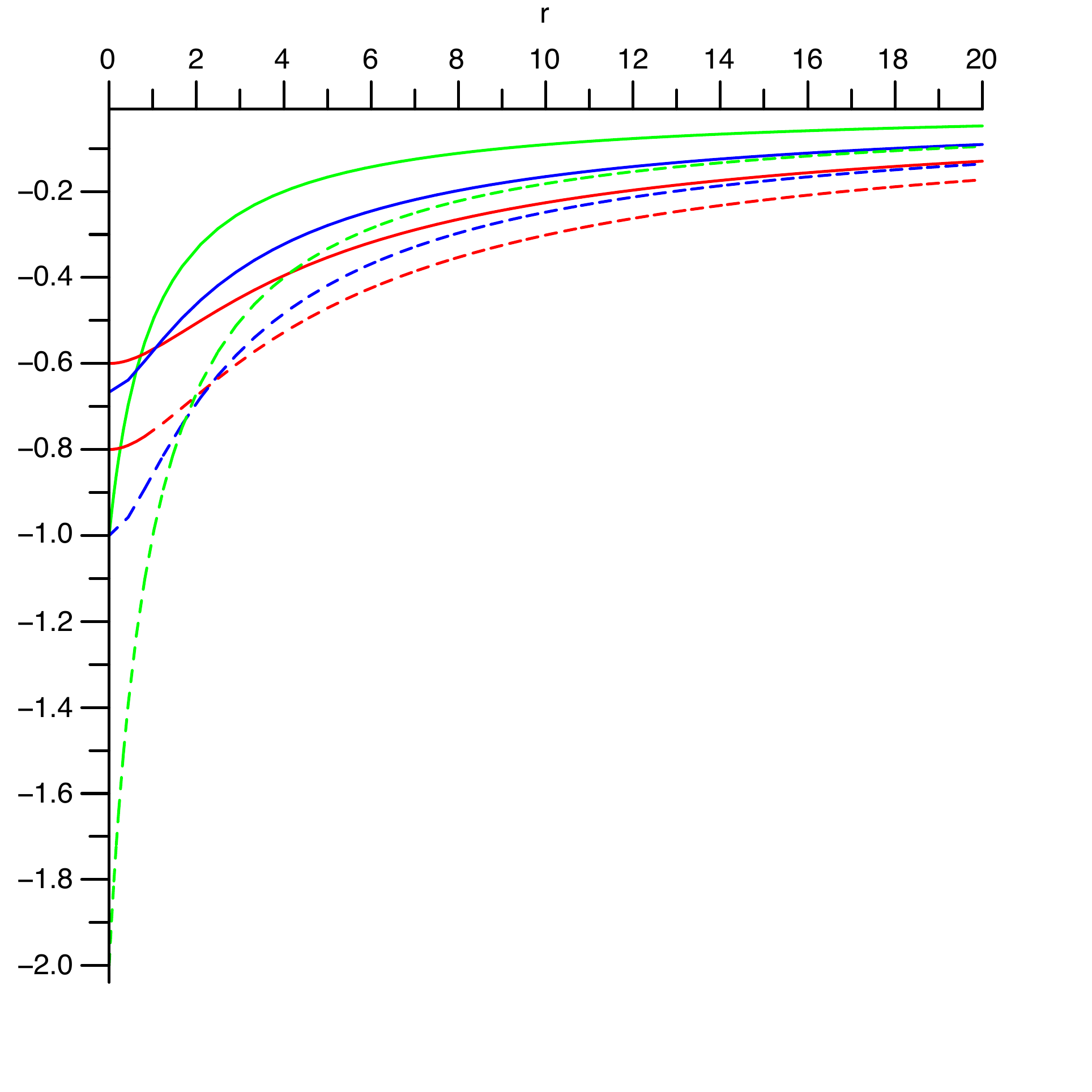}
\vskip -8mm \caption{The potential functions $V_{\nu}$ appearing
in Table I,  corresponding to $\nu$ equal to $\frac{3}{2}$,
$\frac{5}{3}$ and $\frac{7}{2}$
(top to bottom), for $D=1$ (solid line) and $D=3$
(dashed line).}
\end{center}
\label{Fig3}
\end{figure}

\section{Conclusions}

 We have determined the $D$-dimensional spherically symmetric
 potential functions having ground states of the $q$-gaussian
 form, either in configuration or in momentum space. In the case
 of $q$-gaussian ground states in configuration space we obtained a
 bi-parametric family of potentials admitting the $D$-dimensional
 isotropic harmonic oscillator as the particular case corresponding
 to the limit $q\rightarrow 1$. On the other hand, when
 considering ground states having the shape of a q-Gaussian in
 momentum space we obtained a family of potentials closely related
 to the $D$-dimensional Coulomb (or Hydrogen) potential $-\frac{1}{r}$.
 In point of fact, this family admits the standard
 ($D$-dimensional) Coulomb potential itself as a particular
 instance.
 Moreover, for large values of $r$, all the above mentioned
 potentials behave asymptotically as $-\frac{1}{r}$
  for all $0<q<1$.

  Within classical mechanics, it is already well known that
  there is a close relationship between the potential function
  $-\frac{1}{r}$ (describing Newtonian gravitation) and
  maximum $S_q$ distributions. The celebrated polytropic
  solutions of the Vlasov-Poisson equations, widely used in the study
  of self-gravitating astrophysical systems, have indeed the
  $S_q$-maxent form, and the associated velocity distributions
  are $q$-Gaussians. It is an intriguing fact that, as we have
  shown in the present work, there also exits a close connection
  between $q$-Gaussians and the $-\frac{1}{r}$ potential in
  quantum mechanics. \\

  \noindent
  {\bf Acknowledgments.} This work was partially supported by
  by the Project FQM-2445 of the Junta de Andaluc\'{\i}a
  and by the Grant FIS2008-2380 of the Ministerio de
  Innovaci\'on y  Ciencia, Spain.

\end{document}